\documentclass[a4paper,fleqn]{cas-dc}

\usepackage[authoryear,longnamesfirst]{natbib}

\usepackage[english]{babel}
\usepackage[utf8]{inputenc}
\usepackage{times}
\usepackage{soul}
\usepackage{amsmath}
\usepackage{amssymb}
\usepackage{amsthm}
\usepackage{url}
\usepackage{booktabs}
\usepackage{fontawesome}
\usepackage{utfsym}
\usepackage{wasysym}
\usepackage{makecell}
\usepackage{algorithm}
\usepackage{algpseudocode}
\usepackage{pifont}
\newtheorem{example}{Example}
\newtheorem{theorem}{Theorem}

\newtheorem{lemma}{Lemma}
\newtheorem{definition}{Definition}
\newtheorem{observation}{Observation}

\newtheorem{invariant}{Invariant}
\usepackage{comment}
\usepackage[small]{caption}
\usepackage[switch]{lineno}
\usepackage{xcolor}
\usepackage{tikz}
\usetikzlibrary{positioning}
\usepackage{caption}

\def\tsc#1{\csdef{#1}{\textsc{\lowercase{#1}}\xspace}}
\tsc{WGM}
\tsc{QE}
\shorttitle{Verification of SD-EF and SD-EF1 in Time Proportional to Input Size}
\shortauthors{K.-W. Choi}

\ExplSyntaxOn
    \cs_gset:Npn \__first_footerline: { }
\ExplSyntaxOff
\begin{document}
\let\WriteBookmarks\relax
\def\floatpagepagefraction{1}
\def\textpagefraction{.001}

% Main title of the paper
\title [mode = title]{Verification of Stochastic Dominance Envy-Freeness in Time Proportional to Input Size}  

% First author
%
% Options: Use if required
% eg: \author[1,3]{Author Name}[type=editor,
%       style=chinese,
%       auid=000,
%       bioid=1,
%       prefix=Sir,
%       orcid=0000-0000-0000-0000,
%       facebook=<facebook id>,
%       twitter=<twitter id>,
%       linkedin=<linkedin id>,
%       gplus=<gplus id>]

\author[1]{Kui-Wang Choi}[orcid=0009-0009-8723-0223]

% Email id of the first author
\ead{kuiwchoi2-c@my.cityu.edu.hk}

% Address/affiliation
\affiliation[1]{organization={Department of Computer Science, City University of Hong Kong},country={Hong Kong}}

% Here goes the abstract
\begin{abstract}
We present a time-optimal algorithm for verifying Stochastic Dominance Envy-Freeness (SD-EF) and its relaxation, SD-EF up to one good (SD-EF1), in the fair division of indivisible goods. By leveraging a single-pass prefix-dominance check per agent and lazy-initialization, we reduce the verification complexity from the previously known $\mathcal{O}(n^2m)$ by \cite{aziz2016} to $\mathcal{O}(nm)$. Given that the input preference matrix is of size $nm$, our algorithm is asymptotically optimal with respect to the input size.
\end{abstract}

% Keywords
% Each keyword is seperated by \sep
\begin{keywords}
 Resource Allocation\sep Fair Division \sep Algorithmic Game Theory
\end{keywords}

\maketitle

\section{Introduction}
The fair division of indivisible goods is a fundamental problem at the intersection of economics, computer science, and multi-agent systems \citep{lipton2004,budish2011}. Traditional fair division models typically assume that agents have complete cardinal valuations, exact numerical values representing their utilities for each item. In real-world applications, however, eliciting exact numerical utilities from human agents is cognitively demanding, prone to inconsistencies, and often impractical. A more realistic and widely adopted alternative assumes that agents can express only ordinal preferences, providing a simple ranking of items from most to least preferred \citep{bouveret2010}.

Evaluating fairness under ordinal preferences requires criteria that rely solely on ranking structure, rather than on hidden numerical values. A standard approach is to use Stochastic Dominance (SD), which compares the ``shapes" of bundles based on the accumulation of highly ranked items \citep{aziz2015}. Because Stochastic Dominance Envy-Freeness (SD-EF) is often impossible to guarantee when goods are indivisible (for instance, allocating a single valuable item between two agents), we must rely on its natural relaxation: Stochastic Dominance Envy-Freeness up to One Good (SD-EF1). Under SD-EF1, any envy an agent feels towards another can be eliminated by the hypothetical removal of a single item from the envied agent's bundle.

While computing fair allocations is a primary focus of the field, the verification of fairness, determining whether a given allocation satisfies a specific property, is an equally critical algorithmic task. Verification algorithms serve as essential subroutines in automated auditing tools and larger mechanism design frameworks.

In this paper, we focus on the computational efficiency of the verification process for both SD-EF and SD-EF1. While previous work by \citet{aziz2016} established that these properties can be verified in polynomial time, their approach requires $\mathcal{O}(n^2 m)$ operations, as it explicitly checks each pair of agents. In large-scale systems where the number of agents $n$ is substantial, this quadratic dependence on $n$ represents a significant overhead.

We close this gap by providing an algorithm that verifies SD-EF and SD-EF1 in $\mathcal{O}(nm)$ time. Our approach departs from the pairwise comparison paradigm. Instead, we demonstrate that by iterating once over an agent's preference list, we can simultaneously track its relationship with all other agents. 

Since the input to the problem, which is a preference matrix and an allocation, is of size $\mathcal{O}(nm)$, our algorithm is optimal with respect to the input size. This result provides a definitive complexity bound for verifying these core ordinal fairness notions.

The remainder of this paper is organized as follows. In Section 2, we include related works on stochastic dominance fairness. In Section 3, we formally define the fair division model and the stochastic dominance relations. In Section 4, we present our main algorithm, prove its correctness, and analyze its complexity.

\section{Related Works}
The study of fair division with indivisible goods under ordinal preferences has gained significant traction, as eliciting complete cardinal utilities from agents is often impractical. \cite{bouveret2010} laid crucial groundwork in this area by formalizing the concepts of possible and necessary fairness under ordinal preferences. They used SCI-nets to compute necessary envy-free allocations, establishing key existence results, particularly for the two-agent ($n=2$) case. Building on the challenges posed by ordinal preference structures, \cite{aziz2015} comprehensively investigated fair assignment algorithms, focusing on the characterization and computational complexity of achieving proportionality and envy-freeness when only ordinal rankings are available.

Given that envy-freeness is often impossible to guarantee with indivisible goods, subsequent research has shifted toward verifying relaxed fairness notions or exploring specific domain settings. \cite{aziz2016} tackled the verification problem for Necessary-EF (Consequently, SD-EF for additive valuations) specifically in the assignment problem, and provided an efficient $\mathcal{O}(n^2m)$ verification algorithm for linear ordering. Exploring stronger relaxations, \cite{hosseini2020} demonstrated that allocations satisfying both Envy-Freeness up to any good (EFX) and Pareto Optimality (PO) can be computed in polynomial time if the agents' ordinal preferences are strictly lexicographical.

\section{Preliminaries}
First, \(\forall k \in \mathbb{N}\), denote \([k] := \{1, \ldots, k\}\).

We study fair division with indivisible goods, denoted by the tuple $\mathcal{I} = \left(N, O, (\succ_1, \ldots, \succ_n), A\right)$.

We have a set of \(N = [n]\) agents, and a set of \(m\) items \(O\).

For each agent, there is an additive valuation function \(v_i: 2^{O} \longrightarrow \mathbb{R}\) over the items, where for \(S \subseteq O\), \(v_i(S) = \sum_{g \in S} v_i(\{g\})\). We slightly abuse the notation and assume that \(v_i(g) = v_i(\{g\})\).

Define \(\succcurlyeq\) (resp. $\succ$) as a weak (resp. strict) total-ordering over the goods in \(O\) induced by \(v_i\), where, for all \(g\), \(g' \in O\), \(g \succcurlyeq_i g'\) if and only if \(v_i(g) \geq v_i(g')\) (resp. $g \succ_i g'$ if and only if $v_i(g) > v_i(g')$). Each agent has a weak total-order preference for goods, i.e. \(g_1 \succcurlyeq_i \ldots \succcurlyeq_i g_m\) (resp. \(g_1 \succ_i \ldots \succ_i g_m\)). Also, we denote the ordering of goods for agent $i$ as $g_{i, 1}, \ldots, g_{i, m}$.

An allocation \(A = (A_1, \dots, A_n)\) is defined as an ordered partial partition of \(O\), where \(A_i\) is the goods allocated to agent \(i\), such that \(\forall i, j \in [N]\) where \(i \neq j\), \(A_i \cap A_j = \emptyset\) and \(\bigcup_{i \in \mathcal{N}} A_i = O\). 

We assume the input is a matrix with size \(nm\). The \((i, j)\)-th entry denotes the \(j\)-th preference good for agent \(i\).

When valuations are ordinal, we cannot compare bundles using absolute numerical values. Instead, we use Stochastic Dominance (SD), which provides a robust way to compare sets of items based on their rank distributions.
\begin{definition} [Stochastic Dominance Relation (SD)]
    Define two bundles \(X, Y \subseteq O\). \(X \succcurlyeq_i^{SD} Y\) if \(\forall g \in O\),
    \[
    |\{g': g' \succcurlyeq_i g\} \cap X| \geq |\{g': g' \succcurlyeq_i g\} \cap Y|.
    \]
\end{definition}

In other words, for every prefix of agent $i$'s preference list, $X$ contains at least as many items as $Y$. Crucially, $X \succcurlyeq_i^{SD} Y$ if and only if $v_i(X) \geq v_i(Y)$ for every additive valuation function consistent with the preference ordering $\succ_i$.Using this relation, we can define a strong notion of fairness where no agent prefers another's bundle under any consistent valuation.

\begin{definition} [Stochastic Dominance Envy-Freeness (SD-EF)] 
An allocation \(A = (A_1, \dots, A_n)\) is stochastic dominance envy-free (SD-EF) if \(\forall i, j \in N\), \(A_i \succcurlyeq_i^{SD} A_j\).
\end{definition}

While SD-EF is a powerful fairness guarantee, it is highly restrictive. To relax this notion, we consider a relaxation based on the ``up to one good" (EF1) principle.

\begin{definition} [Stochastic Dominance Envy-Freeness Up to One Good (SD-EF1)] 
An allocation \(A = (A_1, \dots, A_n)\) is stochastic dominance envy-free up to one good (SD-EF1) if \(\forall i, j \in N\), \(\exists g \in A_j\) such that \(A_i \succcurlyeq_i^{SD} A_j \setminus \{g\}\).
\end{definition}

\section{Main Result}
Before we present our algorithm, we illustrate the algorithm by \cite{aziz2016}, which verifies SD-EF for linear ordering preferences in $\mathcal{O}(n^2m)$ time.

First, they iterate through each ordered pair of agents $i$ and $j$. Then, they iterate through each good from agent $i$'s preference list, from the most preferred to the least preferred. They maintain two arrays that denote the number of goods allocated to $i$ and $j$ that have a preference not less than the current iterated good.

The allocation violates SD-EF if the number of goods allocated to agent $i$ is less than the number of goods allocated to agent $j$ at any point during the iteration of the goods in the preference list. We define this condition as the \textit{Prefix-Dominance Condition}.

It is easy to see that their approach is exactly the definition of SD-EF. We provide an example to make understanding easier.

\begin{example}
    Assume there are two agents, $m$ goods, and both agents' preferences are $g_1 \succ \ldots \succ g_m$. For demonstration purposes, assume that odd-indexed goods are allocated to agent $1$, and even-indexed goods are allocated to agent $2$.

    In the iteration of agent $1$, we iterate from $g_{1, 1}$ to $g_{1, m}$. We maintain $cnt_1$ and $cnt_2$ of size $m$ initialized with $0$.

    For $g_{1, 1}$, $cnt_1[1] = 1$ as it is allocated to agent $1$.

    For $g_{1, 2}$, $cnt_2[2] = 1$, as it is allocated to agent $2$. Since $cnt_1[2] \geq cnt_2[2]$, it does not fail the test.

    Continuing the process shows that this allocation passes the SD-EF test.

    In the iteration of agent $2$, we iterate from $g_{2, 1}$ to $g_{2, m}$. We maintain $cnt_1$ and $cnt_2$ of size $m$ initialized with $0$.

    For $g_{2, 1}$, $cnt_1[1] = 1$ as it is allocated to agent $1$. However, since $cnt_2[1] = 0 < cnt_1[1]$, it fails the test. Therefore, this instance is not SD-EF.
\end{example}

However, their algorithm's bottleneck is in the iteration over pairs of agents, resulting in $\mathcal{O}(n^2)$. We break through this bottleneck by iterating over a single agent, considering their preference list, and verifying fairness against all other agents, all at once, without additional time.

We first describe key lemmas and intuition for strict total-ordering preferences, then continue with weak total-ordering preferences.

We propose the key lemma for checking the \textit{Prefix-Dominance Condition}.
\begin{lemma} \label{lem:1}
    For any two agents $i$ and $j$, if $A_i \not\succcurlyeq^{SD}_i A_j$, i.e., $\exists g \in O$ s.t. 
    \[
    |\{g': g' \succcurlyeq_i g\} \cap A_i| < |\{g': g' \succcurlyeq_i g\} \cap A_j|,
    \]
    then $g \in A_j$.

    In other words, during the iteration through agent $i$'s preference list, if \textit{Prefix-Dominance Condition} is violated at $g$, then $g$ must not be allocated to agent $i$.
\end{lemma}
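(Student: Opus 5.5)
The statement is only correct for a suitably chosen violating good, so I would first fix the reading. Once a violation occurs, later goods allocated to neither $i$ nor $j$ keep both counts unchanged, so the violation persists at goods outside $A_j$. I would therefore prove the statement for the \emph{first} violating good: the most preferred $g$ (in $\succ_i$) at which the Prefix-Dominance Condition fails. This is also the point at which the iterative check described above detects the failure.

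For strict preferences, write agent $i$'s list as $g_{i,1},\ldots,g_{i,m}$. Define $a_k := |\{g_{i,1},\ldots,g_{i,k}\}\cap A_i|$ and $b_k := |\{g_{i,1},\ldots,g_{i,k}\}\cap A_j|$, with $a_0=b_0=0$. Let $k$ be the least index with $a_k<b_k$, so $g=g_{i,k}$.

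Suppose for contradiction that $g_{i,k}\notin A_j$. Then $b_k=b_{k-1}$, while $a_k\ge a_{k-1}$ always holds. Hence $a_{k-1}\le a_k<b_k=b_{k-1}$. If $k=1$, this gives $0<0$, which is impossible. If $k>1$, index $k-1$ is an earlier violation, contradicting minimality. Therefore $g\in A_j$. Since $A_i\cap A_j=\emptyset$ for $i\neq j$, it follows that $g\notin A_i$, which is the ``in other words'' reformulation. The case $i=j$ is vacuous, because then no violation can occur.

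For weak orders, the prefix sets $\{g' : g'\succcurlyeq_i g\}$ are unions of whole indifference classes. I would run the same argument over these classes. Take the first class $C$ at which the condition fails. If $C\cap A_j=\emptyset$, then the $j$-count does not increase across $C$, while the $i$-count does not decrease, so the previous class (or the empty prefix) already violates the condition, a contradiction. Hence some good of $C$ lies in $A_j$, and that good serves as the witness $g$, since it has the same prefix set as every other good in $C$.

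The only real obstacle is this interpretive point. As literally quantified, the lemma is false for later violations, so the write-up has to state explicitly that $g$ is taken to be the first violation (or that some violating $g$ exists in $A_j$). The monotonicity argument itself is routine.
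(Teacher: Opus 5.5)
Your proof is correct and uses the same idea as the paper's. Both look at the first index where the prefix counts cross, and both use that the $A_i$-count never decreases while the $A_j$-count is unchanged at any good outside $A_j$.

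The difference is in the contradiction hypothesis. The paper assumes $g_{i,k}\in A_i$, together with the implicit assumption that the condition still held at $g_{i,k-1}$. Its argument therefore never treats the case where $g_{i,k}$ belongs to a third agent. By itself it establishes only the ``in other words'' claim $g\notin A_i$.

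Your hypothesis $g_{i,k}\notin A_j$ covers that case as well. It gives the headline claim $g\in A_j$, and $g\notin A_i$ then follows by disjointness. This stronger form is what actually justifies the algorithm comparing agent $i$ only against the owner of the current good.

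You are right that the lemma must be read as a statement about the first violation; the paper uses this restriction only tacitly. Your indifference-class version for weak orders is a valid addition that the paper does not attempt.
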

\begin{proof}
    We prove by contradiction.

    First, consider $g_{i, 1}$. If \textit{Prefix-Dominance Condition} is violated, and $g_{i, 1}$ is allocated to agent $i$, $|\{g': g' \succcurlyeq_i g_{i, 1}\} \cap A_i| = 1$. However, $\forall j \in N \setminus \{i\}$, $A_j = \emptyset$, which implies $|\{g': g' \succcurlyeq_i g\} \cap A_j| = 0$, and it does not violate the condition.

    Then, assume \textit{Prefix-Dominance Condition} is violated, and $\exists k \in [m] \setminus \{1\}$ s.t. 
    $|\{g': g' \succcurlyeq_i g_{i, k}\} \cap A_i| < |\{g': g' \succcurlyeq_i g_{i, k}\} \cap A_j|$ and $|\{g': g' \succcurlyeq_i g_{i, k-1}\} \cap A_i| \geq |\{g': g' \succcurlyeq_i g_{i, k-1}\} \cap A_j|$. However, 
    \[
    |\{g': g' \succcurlyeq_i g_{i, k}\} \cap A_i| = |\{g': g' \succcurlyeq_i g_{i, k-1}\} \cap A_i| + 1 \geq\]
    \[
    |\{g': g' \succcurlyeq_i g_{i, k}\} \cap A_j| = |\{g': g' \succcurlyeq_i g_{i, k-1}\} \cap A_j|,
    \]
    which also does not violate the condition.

    Thus, our result follows.
\end{proof}

Next, notice that the arrays used to maintain the number of goods allocated to agents $i$ and $j$ can be reduced to two integer counters, or to an integer array of length $2$.

We show how to apply Lemma~\ref{lem:1} for reducing the time complexity. First, we iterate over each agent and its preference list, from the most preferred good to the least preferred. Second, we maintain a goods counter of length $n$ that keeps track of the number of goods allocated to each agent. Lastly, during the iteration of the preference list, we use Lemma~\ref{lem:1} to check the condition between two agents only.

Then, we show how to extend this result to SD-EF1. In particular, we show how to extend our algorithm so that it does not require modifying the \textit{Prefix-Dominance Condition}. To begin with, we highlight a fact based on the property of SD-EF1.
\begin{lemma} \label{lem:2}
    Consider two agents $i$ and $j$. Let agent $j$'s bundle be non-empty, and $g$ be the most preferred good in agent $j$'s bundle for agent $i$, i.e., $\not\exists g' \in A_j$ s.t. $g' \succ_i g$.

    If $A_i \not\succcurlyeq_i^{SD} A_j \setminus \{g\}$, then $\not\exists g' \in A_j$ where $A_i \not\succcurlyeq_i^{SD} A_j \setminus \{g'\}$.

    In other words, when considering SD-EF1, we should always consider the theoretical removal of the most preferred good for agent $i$.
\end{lemma}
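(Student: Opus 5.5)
We read the conclusion of Lemma~\ref{lem:2} in the sense given by its ``in other words'' sentence: if removing $g$ does not restore SD-dominance, then no choice of $g' \in A_j$ does. (The displayed formula literally writes $\not\succcurlyeq$ inside, which appears to be a typo for $\succcurlyeq$.) We prove the contrapositive. Suppose some $g' \in A_j$ satisfies $A_i \succcurlyeq_i^{SD} A_j \setminus \{g'\}$; we show $A_i \succcurlyeq_i^{SD} A_j \setminus \{g\}$. The plan has two steps. First show $A_j \setminus \{g'\} \succcurlyeq_i^{SD} A_j \setminus \{g\}$. Then chain the two prefix inequalities.

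For the first step, fix any $h \in O$ and let $P_h = \{x : x \succcurlyeq_i h\}$. We compare $|P_h \cap (A_j\setminus\{g\})|$ with $|P_h \cap (A_j \setminus\{g'\})|$, splitting on whether $g \in P_h$.

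\emph{Case $g \in P_h$.} Then $|P_h \cap (A_j\setminus\{g\})| = |P_h\cap A_j| - 1$. On the other side,
\[
|P_h \cap (A_j\setminus\{g'\})| = |P_h\cap A_j| - [g' \in P_h] \ge |P_h \cap A_j| - 1,
\]
so the desired inequality holds.

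\emph{Case $g \notin P_h$.} Here we claim $P_h \cap A_j = \emptyset$, which makes both counts $0$. Suppose some $x \in A_j$ had $x \succcurlyeq_i h$. By the choice of $g$ there is no $x \in A_j$ with $x \succ_i g$, so $g \succcurlyeq_i x$ by totality of the weak order. Transitivity then gives $g \succcurlyeq_i h$, i.e.\ $g \in P_h$, a contradiction.

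This second case is the only delicate point, because with weak orders $g'$ may be tied with $g$. The argument above handles ties, since prefixes are defined via $\succcurlyeq_i$ and are therefore closed under ties.

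For the second step, fix $h$ and combine the hypothesis with the first step:
\[
|P_h \cap A_i| \ge |P_h \cap (A_j\setminus\{g'\})| \ge |P_h \cap (A_j\setminus\{g\})|.
\]
Since $h$ was arbitrary, this is exactly $A_i \succcurlyeq_i^{SD} A_j\setminus\{g\}$, which completes the contrapositive.
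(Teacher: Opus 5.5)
Your proof is correct and follows the same route as the paper, whose one-line argument asserts that $A_j \setminus \{g\}$ is the SD-worst single-good removal and then chains this with the hypothesis; you supply the prefix-by-prefix verification of that claim, including the tie case for weak orders, which the paper leaves implicit. You are also right to read the inner $\not\succcurlyeq_i^{SD}$ as a typo for $\succcurlyeq_i^{SD}$: taken literally, the conclusion fails at $g' = g$ whenever the hypothesis holds.
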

\begin{proof}
    It is intuitive, since if there exists any good whose removal eliminates envy, then removing the most preferred good also eliminates envy, because $A_j \setminus \{g\}$ is the worst among all single-good removals.
\end{proof}

Therefore, during the iteration over an agent's preference list, when we encounter a good that is the first (most preferred) good in another agent's bundle, we ignore it and continue.

However, the results above raise two questions. First, the goods counter of length $n$ must be initialized after each agent's preference list has been processed. Thus, the approach above takes $\mathcal{O}(n^2 + nm)$ time. \textit{How can we eliminate the overhead of initializing the counter} would be the first question. Second, in the verification of SD-EF1, \textit{how do we know that the current good is the first good we encounter for an agent?}

Surprisingly, both questions can be answered using the same approach. We use lazy initialization with a dirty bit array to resolve both problems.

Intuitively, we maintain a dirty bit array, where the $k$-th entry can tell us ``have we iterated to a good that has been allocated to agent $k$ in the current preference list iteration". If the answer is yes, we continue with the normal procedure. Otherwise, we lazily initialize the goods counter. In addition, if we are verifying SD-EF1, we also know that the current good is the first good we encounter for an agent.

The dirty bit array can be maintained efficiently by keeping track of which agent's preference list each entry was last used in.

Therefore, verifying SD-EF and SD-EF1 under strict total-ordering preferences can be done in $\mathcal{O}(nm)$. Building on this result, we extend to weak total-ordering preferences.

The key lemma, Lemma~\ref{lem:1} breaks down in weak total-ordering preferences, because we need to process all goods of the same preference before we can verify \textit{Prefix-Dominance Condition}. In the worst case, all $n$ agents have been allocated a good with the same preference, resulting in an additional $n-1$ checks. Other direct approaches, such as sorting preferences, also result in additional time. \textit{How to process goods of the same preference efficiently} is our next problem.

To resolve this, we propose the following observation.
\begin{observation} \label{lem:3}
    Assume we are now iterating through the preference list for agent $i$. Let agent $j \neq i$ be an agent that has the most preferred good allocated to them from $g_{i, 1}$ to $g_{i, k}$. If \textit{Prefix-Dominance Condition} is violated after the iteration of $g_{i, k}$, goods that are allocated to agent $j$ are more than goods that are allocated to agent $i$ between $g_{i, 1}$ to $g_{i, k}$.

    In other words, after the iteration of $g_{i, k}$, let agent $j$ be the agent with the largest goods counter entry. If \textit{Prefix-Dominance Condition} is violated, the goods counter entry for agent $j$ is more than that of agent $i$.
\end{observation}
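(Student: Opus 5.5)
Since the observation is phrased slightly ambiguously, we will prove it in the form of its second formulation. Fix agent $i$ and the prefix $P_k = \{g_{i,1},\dots,g_{i,k}\}$. For each agent $\ell$, let $c_\ell = |P_k \cap A_\ell|$ be its goods counter entry after processing $g_{i,k}$. Let $j \neq i$ be an agent maximizing $c_j$ among agents other than $i$. The claim is: if the \textit{Prefix-Dominance Condition} is violated at this point for some pair $(i,\ell)$, meaning $c_i < c_\ell$ for some $\ell$, then $c_i < c_j$. Under this reading, it suffices to compare agent $i$ against a single agent with the largest counter, instead of against all $n-1$ agents.

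The argument is a one-line maximality step. Suppose the condition is violated, so there is $\ell \neq i$ with $c_\ell > c_i$. By the choice of $j$ we have $c_j \ge c_\ell$, and hence $c_j > c_i$. Conversely, if $c_j > c_i$, the condition is violated witnessed by $j$. So violation at prefix $P_k$ holds if and only if $\max_{\ell \ne i} c_\ell > c_i$.

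The statement must also hold for weak orders, where a check is only meaningful at the end of a tie class. For this we note that the counters $c_\ell$ depend only on the set $P_k$, not on the order in which tied goods were processed. So when $k$ is the last index of a block of equally preferred goods, $P_k = \{g' : g' \succcurlyeq_i g_{i,k}\}$, and the equivalence above applies verbatim to the prefix sets appearing in the definition of $\succcurlyeq_i^{SD}$.

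The only real obstacle is reconciling the first formulation, which speaks of the agent holding ``the most preferred good'', with the maximal-counter formulation. I would adopt the second formulation as the intended meaning and remark on this explicitly. I would also note that maintaining $\max_{\ell\ne i} c_\ell$ is easy, since counters only increase by one per step. Together with the lazy initialization described before, this gives $\mathcal{O}(1)$ work per good.
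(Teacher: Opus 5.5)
Your argument is correct and matches the paper's: in the proof of the theorem, the paper justifies this observation by the same maximality chain $\textit{count}[i] < \textit{count}[k] \le \textit{count}[j]$ for the other agent $j$ with the largest counter, using Invariant~\ref{inv:1} to identify the counters with prefix counts (shifted down by one for SD-EF1, which your maximality step covers unchanged). Reading the observation through its second, maximal-counter formulation is the right call, since under the literal ``holds the most preferred good'' reading the claim can fail.
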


Lemma~\ref{lem:3} implies that we only need to check between the largest goods counter entry and agent $i$ after each bundle of the same preference goods, which can be done in constant time.

Thus, we provide the high-level idea of our algorithm. First, we initialize the goods counter and the dirty bit array. Then we iterate over each agent and their preference list. We process the same preference goods in a bundle while maintaining maximum goods counter entry. After each bundle, we check \textit{Prefix-Dominance Condition}.

If \textit{Prefix-Dominance Condition} is violated during the execution of the algorithm, then the allocation is not SD-EF or SD-EF1. Otherwise, the allocation is SD-EF or SD-EF1.

We prove the algorithm's correctness and time complexity in the following theorem.

\begin{theorem}
    Verifying SD-EF and SD-EF1 with $n$ agents and $m$ items with weak total-ordering preferences can be done in $\mathcal{O}(nm)$.
\end{theorem}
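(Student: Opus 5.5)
We describe the algorithm, prove it correct, and then bound its running time. For each agent $i$ we scan $i$'s preference list $g_{i,1},\dots,g_{i,m}$ once, grouping consecutive goods of equal value for $i$ into \emph{tie classes}. We precompute an owner array $\mathrm{own}[g]$, with $g\in A_{\mathrm{own}[g]}$, in $\mathcal{O}(m)$ time from $A$. The state is a global counter array $c[1..n]$ and a stamp array $s[1..n]$ recording the last scan in which entry $k$ was touched. Whenever we read $c[k]$ during the scan of agent $i$ and $s[k]\neq i$, we first reset $c[k]:=0$ and set $s[k]:=i$; this is the lazy initialization. During the scan we also keep a running maximum $M=\max_{k\neq i} c[k]$ together with $c[i]$.

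The update rule for a good $g$ with owner $k=\mathrm{own}[g]$ depends on the notion being verified:
\begin{itemize}
\item[(a)] If $k=i$, increment $c[i]$.
\item[(b)] For SD-EF, if $k\neq i$, increment $c[k]$.
\item[(c)] For SD-EF1, if $k\neq i$ and the stamp showed $s[k]\neq i$ before the reset (so $g$ is the first good of $A_k$ met), set $c[k]:=0$ and do not count $g$. Otherwise increment $c[k]$.
\end{itemize}
After any increment we update $M:=\max(M,c[k])$, and at the end of each tie class we reject if $M>c[i]$.

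For correctness, fix $i$. The sets $\{g' : g'\succcurlyeq_i g\}$ for $g\in O$ are exactly the unions of a prefix of tie classes. Hence $A_i\succcurlyeq_i^{SD}A_j$ holds iff, at every tie-class boundary, the count of $A_i$-goods seen so far is at least the count of $A_j$-goods seen so far. By the lazy reset, after the first touch in scan $i$ the counter $c[j]$ equals exactly the number of $A_j$-goods scanned so far, and an untouched $c[j]$ is implicitly $0$. Therefore the condition holds for all $j\neq i$ simultaneously iff $\max_{j\neq i} c[j]\le c[i]$ at each boundary, which is Observation~\ref{lem:3}. For $j=i$ the condition holds trivially.

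Since counters only increase within a scan, the running maximum $M$ equals that maximum. For SD-EF1, Lemma~\ref{lem:2} reduces the condition to $A_i\succcurlyeq_i^{SD}A_j\setminus\{g_j^*\}$, where $g_j^*$ is the first good of $A_j$ in $i$'s order. Under ties we may choose $g_j^*$ as the first one scanned: any good of $A_j$ that is top-ranked for $i$ gives the same prefix counts at tie-class boundaries, so the choice is harmless. Rule (c) skips exactly $g_j^*$, so the same boundary test is correct. An empty $A_j$ is never touched and imposes no constraint.

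The step I expect to need the most care is the interplay of ties with the boundary check. The argument above needs two facts. First, checking only at boundaries is both necessary and sufficient, because $\{g' : g'\succcurlyeq_i g\}$ is constant on a tie class. Second, the SD-EF1 removal may be of any top-ranked good.

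For complexity, precomputation takes $\mathcal{O}(nm)$ to read the input, and initializing $c$ and $s$ costs $\mathcal{O}(n)$, done once. Each scan does $\mathcal{O}(1)$ work per good, namely a stamp check, a possible reset, an increment, and a max update, plus $\mathcal{O}(1)$ per tie-class boundary. That is $\mathcal{O}(m)$ per agent and $\mathcal{O}(n+nm)=\mathcal{O}(nm)$ in total, since $m\ge 1$. No per-agent $\Theta(n)$ reinitialization occurs thanks to the stamps. This matches the input size and proves the theorem.
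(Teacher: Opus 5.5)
Your proposal is correct and follows essentially the same route as the paper. Your owner array, per-scan stamps with lazy reset, running maximum over the other agents' counters checked against $c[i]$ at each tie-class boundary, and skipping the first-met good of each other bundle for SD-EF1 via Lemma~\ref{lem:2} correspond directly to the paper's \textit{gets}, \textit{dirty}, \textit{count}, and $ptr$; you additionally spell out why boundary-only checks suffice and why, under ties, any top-ranked good may be the one removed, which the paper leaves implicit.
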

\begin{proof}
    For agent $i$, we partition the items according to $i$'s preference list. Let there be $p$ partitions $(P_1, \ldots, P_p)$, such that $\forall q, r \in [p]$ where $q < r$, $\forall s \in P_{q}, \forall t \in P_{r}$, $t \succ_i s$, and $\forall u \in [p]$, $\forall s, t \in P_{u}$, $s \sim_i t$. For simplicity, assume that for any $q$, $g \succcurlyeq_i P_q$ means all goods that are preferred not less than any good in $P_q$, which also includes all goods in $P_q$.

    Define $\textit{count}$ and $\textit{dirty}$ as integer arrays of length $n$, denoting the goods counter and the dirty bit array, respectively. By the definition of Stochastic Dominance Envy-Freeness, we have the following invariant.
    \begin{invariant} \label{inv:1}
        After the iteration through the first $k$ partition of $i$'s preferences, the following holds for every agent $j$.

        \textbf{Case 1: $j$ = $i$.} $\textit{count}[i] = |\{g': g' \succcurlyeq_i P_k\} \cap A_i|$.

        \textbf{Case 2: $j \neq i$, and we are verifying SD-EF.} If $\textit{dirty}[j] \neq i$, $\textit{count}[j]$ is invalid. Otherwise, $\textit{count}[j] = |\{g': g' \succcurlyeq_i P_k\} \cap A_j|$.

        \textbf{Case 3: $j \neq i$, and we are verifying SD-EF1.} If $\textit{dirty}[j] \neq i$, $\textit{count}[j]$ is invalid. Otherwise, $\exists g'' \in A_j$, $\textit{count}[j] = |\{g': g' \succcurlyeq_i P_k\} \cap \{A_j \setminus \{g''\}\}| = |\{g': g' \succcurlyeq_i P_k\} \cap A_j| - 1$.
    \end{invariant}
    \begin{proof}
        We prove by induction. For the base case, we start by initializing agent $i$ by setting $\textit{count}[i]$ to 0 and $\textit{dirty}[i]$ to $i$. Since no goods have been allocated to agent $i$, $|A_i \cap \emptyset| = 0$, which is equal to $\textit{count}[i]$. Therefore, the base case holds.

        Assume that there exists $t$ such that the invariant holds for partitions $P_t'$, $\forall t' \leq t$. Let there be $k_1$ goods in $P_{t+1}$ that are allocated to agent $i$, and $k_2$ goods in $P_{t+1}$ that are allocated to an agent $o \neq i$.

        For agent $i$, in the algorithm, $\textit{count}[i]$ is increased by $k_1$, and since by the induction step, at $t$, $\textit{count}[i] = |\{g': g' \succcurlyeq_i P_t\} \cap A_i|$. When we include the $k_1$ items to agent $i$'s bundle, we have $|\{g': g' \succcurlyeq_i P_{t+1}\} \cap A_i| = |\{g': g' \succcurlyeq_i P_t\} \cap A_i| + k_1$, thus the induction step holds for agent $i$.

        For agent $o \neq i$, if $\textit{dirty}[j] \neq i$, we initialize to $1$ for SD-EF ($0$ for SD-EF1), then increment one for each of the remaining $k_2 - 1$ items; hence after the entire batch, $\textit{count}[j]$ is equal to $k_2$ (resp. $k_2 - 1$). Since $|\{g': g' \succcurlyeq_i P_{t+1}\} \cap A_j| = k_2$, the induction step holds for this case.
        
        Otherwise, in the algorithm, $\textit{count}[j]$ is increased by $k_2$, and since by the induction step, at $t$, $\textit{count}[j] = |\{g': g' \succcurlyeq_i P_t\} \cap A_j|$. When we include the $k_2$ items to agent $j$'s bundle, we have $|\{g': g' \succcurlyeq_i P_{t+1}\} \cap A_j| = |\{g': g' \succcurlyeq_i P_t\} \cap A_j| + k_2$ for SD-EF ($|\{g': g' \succcurlyeq_i P_{t+1}\} \cap A_j| - 1 = |\{g': g' \succcurlyeq_i P_t\} \cap A_j| + k_2 - 1$ for SD-EF1), thus the induction step holds for agent $j$.

        Consequently, our result follows.
    \end{proof}

    Next, assuming agent $j \neq i$ has the maximum $\textit{count}[j]$. Then, by Invariant~\ref{inv:1}, the violation of the \textit{Prefix-Dominance Condition} is $\forall k \neq i$, $\textit{count}[i] < \textit{count}[k] \leq \textit{count}[j]$. Thus, we only need to verify \textit{Prefix-Dominance Condition} for agent $j$.

    Therefore, our algorithm correctly verifies SD-EF and SD-EF1.

    Last, we provide complexity analysis. First, initializing $\textit{gets}$, $\textit{dirty}$, and $\textit{count}$ takes $\mathcal{O}(n)$ in total. Filling in $\textit{gets}$ takes $\mathcal{O}(m)$.

    Then, for each iteration of each agent, for each good in the preference list, initialize $\textit{dirty}$ and $\textit{count}$ for agent $i$ takes $\mathcal{O}(1)$. Since we assume that the preference for each agent is stored in a matrix, accessing the $k$-th item in an agent's preference list, conditions, accessing, and setting variables can be done in $\mathcal{O}(1)$ as well. We denote the above process as $\textit{f}$.

    This entire process takes $\mathcal{O}(nm + \sum_{i \in N} \sum_{k \in [m]} f) = \mathcal{O}(nm)$. Our result follows.
\end{proof}

\section{Discussion}
In this work, we provided an optimal verification algorithm for Stochastic Dominance Envy-Freeness and Envy-Freeness Up to One Good that runs in $\mathcal{O}(nm)$, improving from previous $\mathcal{O}(n^2m)$ by \cite{aziz2016} using lazy initialization with version arrays, effectively closing the theoretical and practical gap.

Future research can explore whether this linear-time complexity can be maintained for more complex settings, such as the division of chores or mixed manna, where the signs of item utilities are heterogeneous, and the prefix-dominance conditions become more intricate.

\section{Acknowledgement}
I would like to thank Professor Minming Li for reviewing a draft of this paper and Yachao Yan for helpful discussions.

%% The Appendices part is started with the command \appendix;
%% appendix sections are then done as normal sections
%% \appendix
% To print the credit authorship contribution details
\printcredits

%% Loading bibliography style file
%\bibliographystyle{model1-num-names}
\bibliographystyle{cas-model2-names}

% Loading bibliography database
\bibliography{bib}

% Biography
%\bio{}
% Here goes the biography details.
%\endbio

%\bio{pic1}
% Here goes the biography details.
%\endbio
\clearpage
\appendix
\section{Appendix}
\subsection{Algorithm}
\begin{algorithm}
    \caption{SD-EF and SD-EF1 Verification}
    \label{alg:sdef1_verify}
    \textbf{Input}: $\mathcal{I} = \left(N, O, (\succcurlyeq_1, \ldots, \succcurlyeq_n)\right)$\\
    \textbf{Output}: \texttt{TRUE} if $A$ is SD-EF/SD-EF1, \texttt{FALSE} otherwise
\begin{algorithmic}[1]
    \State Initialize integer array $\textit{gets}$ of size $m$ with $0$
    \For{$i = 1$ to $n$}
        \For{$g \in A_i$}
            \State $\textit{gets}[g] \gets i$
        \EndFor
    \EndFor
    \State Initialize integer array $\textit{dirty}$ of size $n$ with $0$
    \State Initialize integer array $\textit{count}$ of size $n$ with $0$
    \For{$i = 1$ to $n$}
        \State Initialize $\textit{violates}$ with \texttt{FALSE}
        \State $\textit{dirty}[i] \gets i$
        \State $\textit{count}[i] \gets 0$
        \State $ptr \gets -1$
        \State $k \gets 1$
        \For{$k = 1$ to $m$}
            \If{$k > 1$ and $g_{i, k-1} \succ g_{i, k}$} 
                \If{$\textit{count}[i] < ptr$} \Comment{\textit{Prefix-Dominance Condition}}
                    \State $\textit{violates} \gets$ \texttt{TRUE}
                \EndIf
            \EndIf
            \State $owner \gets \textit{gets}[g_{i,k}]$
            \If{$\textit{dirty}[owner] \neq i$}
                \State $\textit{dirty}[owner] \gets i$
                \If{Verifying SD-EF}
                    \State $\textit{count}[owner] \gets 1$
                \Else
                    \State $\textit{count}[owner] \gets 0$
                \EndIf
            \Else
                \State $\textit{count}[owner] \gets \textit{count}[owner] + 1$
            \EndIf
            \If{$owner \neq i$}
                \State $ptr \gets MAX(ptr, \textit{count}[owner])$
            \EndIf
        \EndFor
        \If{$ptr \neq -1$ and $\textit{count}[i] < ptr$} \Comment{\textit{Prefix-Dominance Condition} after the last item}
            \State $\textit{violates} \gets$ \texttt{TRUE}
        \EndIf
        \If{$\textit{violates}$}
            \State \Return \texttt{FALSE}
        \EndIf
    \EndFor
    \State \Return \texttt{TRUE}
\end{algorithmic}
\end{algorithm}

\subsection{Example}
We provide two examples for verifying SD-EF1 under our algorithm: one that passes the algorithm test and one that fails it.
    \begin{example}
        Assume we have three agents with identical preferences, and seven goods. Let $g_1 \succcurlyeq \ldots \succcurlyeq g_7$ for all agents.

        Let $A_1 = \{g_1, g_4, g_7\}$, $A_2 = \{g_2, g_5\}$, and $A_3 = \{g_3, g_6\}$.

        $\textit{dirty}$ is initialized as $\{i, 0, 0\}$. $\textit{count}$ is initialized as $\{0, 0, 0\}$.

        First, we iterate through each agent $i$, starting from agent $1$. $\textit{dirty} = \{1, 0, 0\}$. $\textit{count} = \{0, 0, 0\}$. $ptr = -1$.

        Next, we iterate through each item. For $g_1$, $owner = 1$. Since $\textit{dirty}[owner] = i = 1$, $\textit{count} = \{1, 0, 0\}$.

        For $g_2$, $owner = 2$. Since $\textit{dirty}[owner] \neq i$, $\textit{count} = \{1, 0, 0\}$, which is equivalent to the theoretical removal of $g_2$ with respect to agent $1$. $\textit{dirty}[owner] = 1$. $ptr = 0$.

        For $g_3$, $owner = 3$, a similar process as above is done.

        For $g_4$, $owner = 1$, thus $\textit{count} = \{2, 0, 0\}$.

        For $g_5$, $owner = 2$. Since $\textit{dirty}[owner] = i$, $\textit{count} = \{2, 1, 0\}$. $ptr = 1$.

        For the remaining goods, a similar process is done.
        
        After all iterations of the goods, we have $\textit{dirty} = \{1, 1, 1\}$, $\textit{count} = \{3, 1, 1\}$, and $ptr = 1$.

        The next agent we iterate is agent $2$. $\textit{dirty} = \{1, 2, 1\}$ as we update $\textit{dirty}[i]$ only. $\textit{count} = \{3, 0, 1\}$ as we update $\textit{count}[i]$ only. $ptr = -1$.

        For $g_1$, $owner = 1$. Since $\textit{dirty}[owner] = 1 \neq i = 2$, $\textit{count} = \{0, 0, 1\}$. $\textit{dirty} = \{2, 2, 1\}$. $ptr = 0$.

        For $g_2$, $owner = 2$. Since $\textit{dirty}[owner] = 1 \neq i = 2$, $\textit{count} = \{0, 1, 1\}$. $\textit{dirty} = \{2, 2, 2\}$.

        For $g_3$, $owner = 3$, a similar process as above is done.

        For $g_4$, $owner = 1$. Since $\textit{dirty}[owner] = i$, $\textit{count} = \{1, 1, 0\}$. $ptr = 1$.

        A similar process can be applied to the remaining goods.

        After all iterations of the goods, we have $\textit{dirty} = \{2, 2, 2\}$, $\textit{count} = \{2, 2, 1\}$, and $ptr = 2$.

        For agent $3$, a similar process can be obtained. Therefore, we can see that this example passes our algorithm test, and it is indeed SD-EF1.
    \end{example}

    \begin{example}
        Assume we have three agents with identical preferences, and seven goods. Let $g_1 \succcurlyeq \ldots \succcurlyeq g_7$ for all agents.

        Let $A_1 = \{g_1, g_2, g_7\}$, $A_2 = \{g_3, g_5\}$, and $A_3 = \{g_4, g_6\}$.

        $\textit{dirty}$ is initialized as $\{i, 0, 0\}$. $\textit{count}$ is initialized as $\{0, 0, 0\}$.

        First, we iterate through each agent $i$, starting from agent $1$. $\textit{dirty} = \{1, 0, 0\}$. $\textit{count} = \{0, 0, 0\}$. $ptr = -1$.

        Next, we iterate through each item. For $g_1$, $owner = 1$. Since $\textit{dirty}[owner] = i = 1$, $\textit{count} = \{1, 0, 0\}$.

        For $g_2$, $owner = 1$. Since $\textit{dirty}[owner] = i$, $\textit{count} = \{2, 0, 0\}$. $\textit{dirty} = \{1, 0, 0\}$.

        For $g_3$, $owner = 2$. Since $\textit{dirty}[owner] \neq i$, $\textit{count} = \{2, 0, 0\}$. $\textit{dirty} = \{1, 1, 0\}$. $ptr = 0$.

        For $g_4$, $owner = 3$, a similar process as above is obtained.

        For $g_5$, $owner = 2$. Since $\textit{dirty}[owner] = i$, $\textit{count} = \{2, 1, 0\}$. $ptr = 1$.

        For the remaining goods, a similar process is done. 
        
        After all iterations of the goods, we have $\textit{dirty} = \{1, 1, 1\}$, $\textit{count} = \{3, 1, 1\}$, and $ptr = 1$.

        The next agent we iterate is agent $2$. $\textit{dirty} = \{1, 2, 1\}$ as we update $\textit{dirty}[i]$ only. $\textit{count} = \{3, 0, 1\}$ as we update $\textit{count}[i]$ only. $ptr = -1$.

        For $g_1$, $owner = 1$. Since $\textit{dirty}[owner] = 1 \neq i = 2$, $\textit{count} = \{0, 0, 1\}$. $\textit{dirty} = \{2, 2, 1\}$. $ptr = 0$

        For $g_2$, $owner = 1$. Since $\textit{dirty}[owner] = i$, $\textit{count} = \{1, 0, 1\}$. $\textit{dirty} = \{2, 2, 1\}$. $ptr = 2$.
        
        Since $\textit{count}[1]$ is valid ($\textit{dirty}[1] = i$), and $ptr > \textit{count}[2]$, this instance fails our algorithm test, and indeed, this instance is not SD-EF1.
    \end{example}

\end{document}